\pdfoutput=1
\documentclass[sigconf,natbib=false,pbalance,authorversion]{acmart}

\setcopyright{cc}
\setcctype{by}
\copyrightyear{2025}
\acmYear{2025}
\acmDOI{10.1145/3734436.3734444}
\acmConference[SACMAT '25]{Proceedings of the 30th ACM Symposium on Access Control Models and Technologies}{July 8--10,
  2025}{Stony Brook, NY}
\acmBooktitle{Proceedings of the 30th ACM Symposium on Access Control Models and Technologies (SACMAT '25), July 8--10, 2025, Stony Brook, NY, USA}
\acmISBN{979-8-4007-1503-7/2025/07}

\RequirePackage[
datamodel=acmdatamodel,
style=acmnumeric, 
]{biblatex}
\addbibresource{references.bib}

\usepackage{graphicx}
\usepackage{tikz}
\usepackage{amsthm}
\usepackage{algorithm}
\usepackage{algpseudocodex}
\usepackage{balance}

\usepackage{booktabs}

\newcommand{\lep}{\le_p}
\newcommand{\DACC}{\mathrm{DACC}}
\newcommand{\COL}{\mathrm{3COL}}
\newcommand{\SP}{\mathrm{SP}}
\newcommand{\coSP}{\mathrm{coSP}}

\newcommand{\NP}{\mathsf{NP}}
\newcommand{\coNP}{\mathsf{coNP}}

\settopmatter{printacmref=true}
\begin{document}
\title{Safety Analysis in the NGAC Model}

\author{Brian Tan}
\email{Brian.Tan@colostate.edu}
\orcid{0009-0005-2455-8791}
\affiliation{%
 \institution{Colorado State University}
 \city{Fort Collins}
 \state{CO}
 \country{USA}}
 
\author{Ewan S. D. Davies}
\email{Ewan.Davies@colostate.edu}
\orcid{0000-0002-2699-0976}
\affiliation{%
 \institution{Colorado State University}
 \city{Fort Collins}
 \state{CO}
 \country{USA}}
 
\author{Indrakshi Ray}
\email{Indrakshi.Ray@colostate.edu}
\orcid{0000-0002-0714-7676}
\affiliation{%
 \institution{Colorado State University}
 \city{Fort Collins}
 \state{CO}
 \country{USA}}
 
\author{Mahmoud A. Abdelgawad}
\email{M.Abdelgawad@colostate.edu}
\orcid{0000-0002-9407-6342}
\affiliation{%
 \institution{Colorado State University}
 \city{Fort Collins}
 \state{CO}
 \country{USA}}

\begin{abstract}
    We study the safety problem for the next-generation access control (NGAC) model. We show that under mild assumptions it is $\coNP$-complete, and under further realistic assumptions we give an algorithm for the safety problem that significantly outperforms naive brute force search. 
    We also show that real-world examples of mutually exclusive attributes lead to nearly worst-case behavior of our algorithm.
\end{abstract}

\begin{CCSXML}
<ccs2012>
   <concept>
       <concept_id>10002978.10002991.10002993</concept_id>
       <concept_desc>Security and privacy~Access control</concept_desc>
       <concept_significance>500</concept_significance>
       </concept>
   <concept>
       <concept_id>10003752.10003777.10003779</concept_id>
       <concept_desc>Theory of computation~Problems, reductions and completeness</concept_desc>
       <concept_significance>500</concept_significance>
       </concept>
 </ccs2012>
\end{CCSXML}

\ccsdesc[500]{Security and privacy~Access control}
\ccsdesc[500]{Theory of computation~Problems, reductions and completeness}

\keywords{Access control, Next-generation access control, safety problem, computational complexity}

\maketitle

\section{Introduction}

Access control policies (ACPs) are an essential technology in cybersecurity. Broadly, given some users and resources, an ACP guarantees the legitimate access of resources by authorized users as well as preventing unauthorized access. 

There are several widely-accepted ACP frameworks including Role-Based Access Control (RBAC) and Attribute-Based Access Control (ABAC). We study a prominent example of ABAC, the Next Generation Access Control (NGAC) model published by the American National Standards Institute~\cite{INCITS}. Roughly, the \emph{state} of an NGAC model is equivalent to a directed graph (digraph) whose vertices comprise sets of users, user attributes, resource attributes, resources, and policy classes. The edges of the digraph correspond to relations on the vertices that fall into one of four categories: obligations, prohibitions, associations, and assignments. Given a state, we say that a user $u$ can \emph{access} a resource $r$ if there is a path from $u$ to $r$ in the associated digraph. 

An important feature of the NGAC model in applications is the provision for \emph{commands} (also known as \emph{obligations}) that, subject to some conditions, change the state of the model. For example, a command could introduce or delete vertices and edges in the digraph after checking a condition such as the existence of an edge in the current state. The dynamic nature of an NGAC model means that formal analysis is nontrivial, and in this work we are interested in a notion of safety for such models. 
For us, an NGAC model is \emph{safe} if the execution of any available obligations cannot modify the state in such a way that a new access path from some user $u$ to a resource $r$ is created. The precise definition is somewhat technical, we defer a formal statement to Section~\ref{sec:prelim}. The \emph{safety problem} ($\SP$) is the problem of determining, given a description of an NGAC model, whether the model is safe in this sense. 
An important subtlety is that safety is represented as the \emph{absence} of some concrete structure in the model, namely a sequence of commands whose execution results in a new access path. 
Thus, it is convenient to consider the complementary problem that could naturally be called the \emph{unsafety problem} or more technically the \emph{co-safety problem} ($\coSP$).

Our main contributions are to give a rigorous definition for the safety problem (and co-safety problem) in the NGAC model, and to show that co-safety is an $\NP$-complete decision problem (in the mono-operational case, see below).
We achieve this in two steps, first abstracting the dynamic nature of co-safety in the NGAC model into a problem of path-finding subject to constraints in a static digraph that we call \emph{directed acyclic constrained connectivity} ($\DACC$). We then reduce 3-coloring ($\COL$) to this decision problem, showing that co-safety is NP-complete via the relation
\[ \COL \lep \DACC \lep \coSP \] 
where $\mathrm A \lep \mathrm B$ represents the standard notion that there is a polynomial-time reduction from problem $\mathrm A$ to problem $\mathrm B$. 
That $\COL$ can be reduced to the co-safety problem in the NGAC model shows that $\coSP$ is $\NP$-hard, and proving that $\coSP\in\NP$ is trivial.
While the fact that a problem equivalent to $\DACC$ is $\NP$-complete has been discussed in the literature before (cf.\ Problem GT54 in Garey and Johnson's comprehensive book~\cite{GJ79} and the paper~\cite{GMO76}), since $\DACC$ is not a well-known $\NP$-complete problem we give the short proof for completeness.

Our secondary contributions are to exploit the graph-theoretic nature of the safety problem for the NGAC model to give an algorithm for the safety problem that significantly outperforms naive brute-force search. Though our algorithm is exponential-time in general, it is adaptive to a graph-theoretic property known as the number of \emph{maximal independent sets} (MIS) in a natural ``supergraph'' constructed from the NGAC model. 
That is, if the specific NGAC model being studied yields supergraph that has a polynomially-bounded number of MIS then our algorithm decides the safety problem in polynomial time. We also investigate the real-world implications of this result by considering natural structures that occur in real-world NGAC models. 
Here, our findings are negative and we give a natural class of NGAC models which encode the structure known by the classic combinatorial result typically attributed to Moon and Moser~\cite{MM60,MM65} to yield the worst possible case for the number of MIS. These observations link the complexity of safety in the NGAC model to a large body of work in extremal combinatorics and the algorithmic study of enumerating MIS.
Writing $\mu(C)$ for the set of MIS in a simple graph $C$, we know~\cite{MM60,MM65} that $|\mu(C)|\le 1.45^{|V(C)|}$, and this forms the basis of the worst-case time complexity of our algorithm.
Naive search for (maximal) independent sets in $C$ takes much longer: time $\Omega(2^{|{V(C)}|})$.

\subsection{Related work}

The kind of safety problem we study here has a long and rich history in cybersecurity. The foundational work of Harrison, Ruzzo and Ullman~\cite{HRU76} defines an ACP framework whose state consists of sets of \emph{rights} in the entries of a matrix indexed by subject-object pairs. We refer to this widely studied model, which appears in many introductory texts~\cite{BB03,Den82,PPC24}, as the HRU model. 
As with the more modern NGAC model, the HRU model is dynamic and features state-changing commands. 

In part due to the flexibility of the HRU model, the safety problem for the model is rather subtle. 
Harrison, Ruzzo and Ullman themselves demonstrated~\cite{HRU76} that different natural formulations of the safety problem for the HRU model can be $\NP$-complete or even undecidable. 
One of the restrictions that permits the $\NP$-completeness proof is that commands are mono-operational (which we explain later), and we make an analogous assumption in our main result too.

Aside from the fact that natural variations on safety can have such wildly different computational complexities, the challenges of safety analysis are further evidenced by the comprehensive review the HRU model due to Tripunitara and Li~\cite{TL13}. Tripunitara and Li refined the definitions of safety problems for the HRU model from the original work to resolve several ambiguities. 
They also corrected several errors in widely-circulated proofs of complexity results on safety problems for the HRU model. 
Their rigorous treatment of the computational complexity of safety for a concrete ACP framework serves as direct inspiration for our work.
In the wake of the original study of Harrison, Ruzzo and Ullman, there are works defining a number of alternative ACP frameworks and a broad literature on safety~\cite{AS91,Bud83,JLS76,LT06,LS77,MPSV00,San92,SS04,Sos00,SMO04}.

\subsection{Organization}

In Section~\ref{sec:prelim} we state precisely what we mean by the NGAC model and define the safety problem that we study. 
In Section~\ref{sec:complexity} we give a precise definition of directed acyclic constrained connectivity ($\DACC$) in order to strip away features of the NGAC model which are not especially important from the perspective of the complexity of the safety problem. 
We then give the reduction from 3-coloring to $\DACC$ and the reduction from $\DACC$ to the co-safety problem. 
In Section~\ref{sec:alg} we give our algorithm for the safety problem that exploits maximal independent sets, as well as proving correctness and running time bounds.
Finally, in Section~\ref{sec:real} we discuss aspects of the performance of our algorithm in a real-world context.

\section{Preliminaries}\label{sec:prelim}
An NGAC model $M$ consists of an $11$-tuple 
\[M = ( U, UA, R, RA, R_{\psi}, A_U, A_R, ASC, P, V, COM)\]
where
\begin{itemize}
    \item $U$ is a set of users,
    \item $UA$ is a set of user attributes,
    \item $R$ is a set of resources,
    \item $RA$ is a set of resource attributes,
    \item $R_\psi$ is a set of access rights,
    \item $A_U$ is a set of assignment edges, 
    \item $A_R$ is a set of assignment edges
    \item $ASC$ is a set of (labeled) association edges,
    \item $P$ is a set of (labeled) prohibition edges,
    \item $V$ is a ``universe'' set of entities that can be in the model,
    \item $COM$ is a set of commands.
\end{itemize}
We require that the sets $U$, $UA$, $R$, $RA$ and $R_\psi$ are pairwise-disjoint and $U\cup UA\cup R\cup RA \subseteq V$\footnote{Alternative formulations are possible, e.g.\ $R\subset RA$ is assumed in~\cite{CDX23}, and the universe and commands are elided.}.
The set $U\cup UA\cup R\cup RA$ forms the vertex set of the digraph $G$ representing the state of the model. 
The access rights $R_\psi$ are used as labels for edges in $ASC$ and $P$, and the sets $A_U$, $A_R$, $ASC$ and $P$ form edges of $G$.
We define $G$ piece-by-piece as follows. 

The sets $A_U$ and $A_R$ correspond to unlabeled, directed edges in $G$ on the vertex sets $U\cup UA$ and $R\cup RA$ such that
\begin{align*}
    A_U &\subseteq (U \times UA)\cup (UA\times UA)\\
    A_R &\subseteq (RA\times R)\cup (RA\times RA).
\end{align*}
It is standard (e.g.~\cite{CDX21}) to assume that the subgraphs of $G$ given by $(U\cup UA, A_U)$ and $(R\cup RA, A_R)$ are acyclic and to refer to them as the \emph{user DAG} and the \emph{resource DAG} respectively. 
The set $ASC$ is a set of labeled edges satisfying
$ASC \subseteq UA \times RA \times R_\psi$, where we think of $a=( ua, rsa, r)$ as an edge in $G$ from the user attribute $ua$ to the resource attribute $rsa$ labeled with the access right $r$.
The prohibition edges $P$ satisfy $P \subseteq UA \times RA \times R_\psi$ and are also thought of as labeled edges of $G$ in the same way. 

The universe set $V$ represents the collection of all possible users, user attributes, resources, and resource attributes that may be added to the state of the model. 
That is, we can think of $V$ as a ``reservoir'' of vertices that do not yet exist in $G$ but that may be added by commands. We restrict $R_\psi$ and $V$ to be finite.

To reflect the dynamic nature of the NGAC model, wherein commands can change the state digraph $G$, it is convenient to index the state graphs with a time variable $t$. 
We call the initial state digraph $G_0$ and the execution of a command $c$ at time $t\ge 1$ takes $G_{t-1}$ as input and produces another state digraph $G_{t}$. 
The set $COM$ is a set of commands of the following form.
\begin{algorithmic}
    \Function{$\alpha$}{$X_1,X_2,\dotsc,X_k$}
        \If{$cond_1$ \textbf{and} $cond_2$ \textbf{and} $\dotsb$ \textbf{and} $cond_m$}
        \State $op_1$
        \State $op_2$
        \State $\dotsc$
        \State $op_n$
        \EndIf
    \EndFunction
\end{algorithmic}
where $X_1,\dotsc, X_k$ are the formal parameters of the command, $cond_i$ is a condition and $op_j$ is one of the primitive operations detailed in Table~\ref{op} (whose input must be one of the formal parameters of $\alpha$). 
We require that each condition is of the form $s \notin G_{t-1}$ where $G_{t-1}$ is the state digraph before the execution of the command and $s$ is either an entity in $V$ or a possible edge in a state digraph\footnote{i.e.\ a suitable element of $V\times V$ or $V\times V\times R_\psi$.}.
Each primitive operation consists of a single addition or deletion of an element of the sets $U$, $UA$, $R$, $RA$, $A_U$, $A_R$, $ASC$, and $P$.
Thus, there are 16 primitive operations.
Note that by convention, for each of the primitive operations that destroys a vertex $v$ of $G$ (i.e.\ an element of $U\cup UA\cup R\cup RA$) also destroys all edges of $G$ that involve $v$. 
For brevity, this convention is suppressed from the notation in Table~\ref{op}.

\begin{table*}[ht]
    \centering
    \caption{List of Primitive Operations}
    \label{op}
    \begin{tabular}{l l l}
        \toprule
        \textbf{Operation} & \textbf{Conditions} & \textbf{Action} \\
        \midrule
        \textbf{create user} $u$ & $u \not\in U \wedge u \in V$ & $U\mapsto U \cup \{u\}$\\
        
        \textbf{create user attr.} $ua$ & $ua \not\in UA \wedge ua \in V$ & $UA \mapsto UA \cup \{ua\}$\\
        
        \textbf{create res.} $rs$ & $rs \not\in R \wedge rs \in V$ & $R\mapsto R \cup \{rs\}$\\
        
        \textbf{create res.\ attr.} $rsa$ & $rsa \not\in RA \wedge rsa \in V$ & $RA \mapsto RA \cup \{rsa\}$\\
        
        \textbf{create user assign.} $au$ & $au \not\in A_U \wedge au \in (U \times UA)\cup (UA\times UA)$ & $A_U\mapsto A_U \cup \{au\}$\\
        
        \textbf{create res.\ assign.} $ar$ & $ar \not\in A_R \wedge ar \in (RA \times R)\cup (RA\times RA)$ & $A_R\mapsto A_R \cup \{ar\}$\\
        
        \textbf{create assoc.} $a$ & $a \not\in ASC \wedge a \in UA \times RA \times R_\psi$ & $ASC \mapsto ASC \cup \{a\}$\\
        
        \textbf{create prohib.} $p$ & $p \not\in P \wedge p \in UA \times RA \times R_\psi$ & $P\mapsto P \cup \{p\}$\\
        
        \textbf{destroy user} $u$ & $u \in U$ & $U \mapsto U \,\backslash\,\{u\}$\\
        
        \textbf{destroy user attr.} $ua$ & $ua \in UA$ & $UA \mapsto UA \,\backslash\,\{ua\}$\\
        
        \textbf{destroy res.} $rs$ & $rs \in R$ & $R\mapsto R \,\backslash\,\{rs\}$\\
        
        \textbf{destroy res.\ attr.} $rsa$ & $rsa \in RA$ & $RA\mapsto RA \,\backslash\,\{rsa\}$\\
        
        \textbf{destroy user assign.} $au$ & $au \in A_U$ & $A_U\mapsto A_U \,\backslash\, \{au\}$\\
        
        \textbf{destroy res.\ assign.} $ar$ & $ar\in A_R$ & $A_R \mapsto A_R \,\backslash\, \{ar\}$\\
        
        \textbf{destroy assoc.} $a$ & $a \in ASC$ & $ASC \mapsto ASC \,\backslash\, \{a\}$\\
        
        \textbf{destroy prohib.} $p$ & $p \in P$ & $P\mapsto P \,\backslash\, \{p\}$\\
        \bottomrule
    \end{tabular}
\end{table*}

Another standard assumption that we make is that all commands are \emph{mono-operational}, which means each command contains one primitive operation, i.e.\ $n = 1$ in the above description of a command. 
Since all primitive operations only take one input, then we can also say that we only need one formal parameter per command, i.e. $k = 1$. 
Thus, we can drop the subscripts on the formal parameter and operation in a command.
We can also assume that $m$ is finite since the fact that $V$ is finite implies that the number of possible conditions is also finite. 
Note that for each command, the set of possible values for the formal parameter is finite, since every set mentioned in the conditions is finite too.

In our algorithm (Section~\ref{sec:alg}), we require a few additional, mild assumptions. 
\begin{enumerate}
    \item\label{itm:onlyedgecreatehascond} Only commands $\alpha(X)$ whose single operation is of the form ``\textbf{create \ldots} $X$'', where $X$ is an edge of the state graph (i.e.\ a user assignment, resource assignment, association or prohibition), have any condition.
    \item\label{itm:onlyedgechecksincond} The conditions in such a command are of the form $e_1 = X$ to make sure the command can only be executed for the input $e_1$, or $e_2 \notin G_{t-1}$ where $e_2$ is a potential edge of the state digraph and $G_{t-1}$ is the state digraph before the execution of the command. 
    These conditions can prevent the creation of a specific edge due to the presence of existing edges.
    \item\label{itm:alldestroy} All vertices and edges which may be present in the state digraph have unconditional destroy commands.
\end{enumerate}

It is natural to consider conditions only on the creation of edges of the state graph as it is edges which determine access to resources and so guarding their creation with conditions is flexible enough to model real-world access policies. Since creating a node of the state graph with no adjacent edges cannot create undesirable access paths, it is not a significant restriction to demand that vertex creation is unconditional.

On the other hand, edge creation conditions allow us  to model natural constraints, for example that a user must have the attribute ``student'' in order to be allowed to gain the attribute ``teaching assistant''. 
This natural type of constraint is expressible within the restrictions we give in assumption~\ref{itm:onlyedgechecksincond}.

Assumption~\ref{itm:alldestroy} is also important in real-world access control models, as e.g.\ when a user leaves an organization one should delete all vertices in the state graph corresponding to that user. 
Though deleting vertices corresponding to attributes is not normally required, allowing for this does not significantly alter the NGAC model or its applicability.
The situation is similar with edges. Promotion or other role changes of personnel in an organization may require the deletion of edges in the state graph (e.g. the promotion of a user to a higher rank with greater security privileges). 
While association or prohibition edges are normally static, they may need to be deleted if the organization suddenly adopts stricter access policies. 

Despite these further restriction on the conditions in commands, NGAC models satisfying our assumptions are sufficiently expressive to allow for \emph{mutually exclusive attributes}. 
These are groups of attributes such that a user or resource can have at most one of the attributes in the group. 
Natural examples of mutually exclusive attributes abound. 
In the context of separation of duty (which is commonly discussed in the context of RBAC~\cite{LTZ07}) one might choose a ``role'' to be a group of mutually exclusive user attributes. An example due to Sandhu and Samarati~\cite{SS94} is that the ``authorizer'' and ``preparer'' roles in the context of paychecks should be mutually exclusive. 
The system is more vulnerable to abuse if a single user has the right to both prepare and authorize checks. If we separate those duties then coordination is required for abuse to occur.
More generally, in ABAC there are natural attributes that users cannot hold simultaneously. 
For example in the category of ``age'' the user attributes ``minor'' and ``adult'' are mutually exclusive. 
Resource attributes can also be mutually exclusive, for example a ``security clearance'' could be exactly one of ``top secret'', ``secret'', ``confidential'' or ``public''. 

Since assigning users or resources roles takes the form of creating edges from $U$ or $R$ to $UA$ or $RA$ in the state digraph, we can implement mutually exclusive attributes in the presence of the assumptions~\ref{itm:onlyedgecreatehascond}--\ref{itm:alldestroy} with create commands whose conditions check that the user or resource which is to gain an attribute does not already have an attribute in the mutually exclusive group. 
We will see in Section~\ref{sec:real} that natural groups of mutually exclusive attributes are a significant obstacle for the algorithm we propose in Section~\ref{sec:alg}.

\subsection{The safety problem}

The following definition simplifies the study of safety in the NGAC model. 

\begin{definition}
    Given a state digraph, we define for each $r\in R_\psi$ the \emph{access relation} to be binary relation $\rightarrow^r$ (formally, a subset of $U \times R$) such that $u \rightarrow^r rs$ if and only if there exists a path in the state digraph from $u$ to $rs$ that goes through an association edge labeled with $r$.

    When studying an NGAC model the state graph is dynamic and hence for each $r\in R_\psi$, the relation $\rightarrow^r$ is also dynamic. 
    Given a sequence $S$ of commands, we write $u\rightarrow_{S}^rrs$ to mean that starting from the current state digraph, after applying the sequence of commands $S$ we have $u\rightarrow^rrs$ in the resulting state digraph.
\end{definition}

\noindent
Note that there are no primitive operations that modify $R_\psi$ and so we cannot create or destroy access relations over time, only modify them. 
We can now define the concept of safety that we study. 

\begin{definition}\label{def:safety}
    Given an NGAC model $M$, we say that $M$ is \emph{safe} if, for all rights $r \in R_\psi$, and all finite sequences of commands $S$ whose elements are in $COM$, we have that 
    \[\rightarrow_{S}^r {} \subseteq {}\rightarrow^r. \]
\end{definition}

\noindent 
The subset notation above means that after the execution of $S$, there cannot be a new element of any access relation. 
That is, no matter what sequence of commands is performed on the model goes through, no new access is gained.

\begin{definition}
    The \emph{safety problem} $\SP$ is the decision problem that, given an input NGAC model $M$, returns ``Yes'' if the model is safe in the sense of Definition~\ref{def:safety}.

    The \emph{co-safety problem} $\coSP$ is the complement of $\SP$. That is, $\coSP$ is a decision problem and the answer to $\coSP(M)$ is ``Yes'' if and only if the answer to $\SP(M)$ is ``No''.

    Note that throughout, we assume that the NGAC model input to these safety problems satisfies the assumptions outlined above. In particular, $V$ is finite, and the commands in $COM$ are mono-operational with the structure required above.
\end{definition}

\section{Computational Complexity}\label{sec:complexity}

In this section we prove that the co-safety problem is $\NP$-complete. 
We will first define a graph-theoretic abstraction of co-safety that we call \emph{directed acyclic constrained connectivity} ($\DACC$). 
This problem serves as an intermediate step is our reduction that ``smooths out'' several features of the NGAC model that are not critical to the complexity of the safety problem. Recall that we show 
\[\COL \leq_p \DACC \leq_p \coSP, \]
where $\COL$ is the classic 3-coloring decision problem (see e.g.~\cite[GT4]{GJ79}).

\begin{definition}
    Given a directed acyclic graph $\Gamma = (V, E)$, a \emph{constraint graph} for $\Gamma$ is a graph $C = (E(\Gamma), E')$ where $e_1e_2 \in E'$ represents the constraint that edges $e_1$ and $e_2$ cannot both exist.

    Given a constraint graph $C$ for $\Gamma$, we say that a subgraph $\Gamma'\subseteq \Gamma$ is \emph{valid} if the edges of $\Gamma'$ form an independent set in $C$. That is, at most one edge of $\Gamma$ from each constraint can be present in $\Gamma'$
\end{definition}

\noindent
The $\DACC$ problem is about the existence of paths in valid subgraphs of a directed acyclic graph $G$.

\begin{definition}
    The \emph{directed acyclic constrained connectivity} problem ($\DACC$) is the decision problem which, given as input a directed acyclic graph $\Gamma = (V, E)$, a constraint graph $C$, and a pair of vertices $s, t \in V$, asks whether there exists a valid subgraph of $\Gamma$ in which there is a (directed) path from $s$ to $t$.
\end{definition}

\noindent One can view $\DACC$ as a modification of the classic $st$-connectivity problem where we introduce constraints on the path from $s$ to $t$ that one should find: we restrict our attention to paths that are valid subgraphs of $\Gamma$ given some constraint graph $C$. 
Our first result is that $\DACC$ is $\NP$-complete.

\begin{theorem}\label{thm:DACCNPcomp}
    $\DACC$ is $\NP$-complete.
\end{theorem}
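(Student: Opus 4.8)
The plan is to prove $\DACC \in \NP$ (easy) and then $\DACC$ is $\NP$-hard via the reduction $\COL \leq_p \DACC$ advertised in the excerpt. Membership in $\NP$ is immediate: a certificate is a valid subgraph $\Gamma' \subseteq \Gamma$ together with a directed $s$--$t$ path inside it; verifying that the chosen edges form an independent set in $C$ and that the path connects $s$ to $t$ takes polynomial time. The substance is the hardness reduction.

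For the hardness direction, given a graph $H = (V_H, E_H)$ on which we want to decide $3$-colorability, I would build a DAG $\Gamma$ in which an $s$--$t$ path is forced to ``choose'' one of three colors for each vertex of $H$, and a constraint graph $C$ that forbids exactly the monochromatic edges. Concretely: order the vertices $v_1, \dots, v_n$ of $H$, create path-layer vertices $x_0, x_1, \dots, x_n$ in $\Gamma$ with $s = x_0$ and $t = x_n$, and for each $i$ put three parallel directed edges $e_i^1, e_i^2, e_i^3$ from $x_{i-1}$ to $x_i$ (parallel edges can be avoided by subdividing, i.e.\ inserting an intermediate vertex on each of the three branches, which keeps $\Gamma$ a simple DAG; I would do this to stay within the stated definition). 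Any directed $s$--$t$ path must traverse, for each $i$, exactly one branch, and I interpret ``path uses $e_i^c$'' as ``$v_i$ receives color $c$''. The constraint graph $C$ has vertex set $E(\Gamma)$ and an edge between $e_i^c$ and $e_j^c$ (same color $c$) whenever $v_i v_j \in E_H$; after subdivision the constraints are placed on, say, the first edge of each branch. All other pairs are unconstrained. Then a valid subgraph containing an $s$--$t$ path exists if and only if one can pick one branch per layer with no two chosen branches of the same color corresponding to an $H$-edge, i.e.\ if and only if $H$ is properly $3$-colorable.

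The correctness argument then has two halves. If $H$ has a proper $3$-coloring $\chi$, take $\Gamma'$ to be the subgraph consisting of the branch $e_i^{\chi(v_i)}$ for each $i$ (plus its subdivision vertices); this contains the directed $s$--$t$ path $x_0 \to x_1 \to \dots \to x_n$, and since $\chi$ is proper no two selected branches with a common color are adjacent in $H$, so the selected edges form an independent set in $C$ and $\Gamma'$ is valid. Conversely, a valid subgraph $\Gamma'$ with an $s$--$t$ path must, because the layers are ``in series'' and the only way from $x_{i-1}$ to $x_i$ is along one of the three branches, use at least one branch per layer; pick one such branch per layer to define $\chi(v_i)$, and validity (independence in $C$) forces $\chi$ to avoid monochromatic $H$-edges. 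One small point to handle carefully: $\Gamma'$ may contain edges not on the chosen path, but since $C$ only constrains ``first edges of branches'' and we can always shrink $\Gamma'$ to exactly the path without destroying validity (removing edges preserves independence), this causes no trouble. The whole construction is clearly polynomial in $|V_H| + |E_H|$.

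I expect the main obstacle to be purely presentational rather than mathematical: getting the gadget exactly right so that $\Gamma$ genuinely is a \emph{simple} directed acyclic graph (hence the subdivision trick to avoid parallel edges) and so that the constraint graph $C$ refers to the correct edges after subdivision. There is no deep combinatorial difficulty — this is essentially the folklore reduction underlying Garey and Johnson's problem GT54 cited in the excerpt — so the proof should be short once the gadget is pinned down. A reader might also want reassurance that acyclicity holds, but that is immediate since all edges go from a lower layer index to the next higher one.
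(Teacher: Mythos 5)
Your reduction is essentially the paper's own: the paper's gadget $v_i \to \{R_i, G_i, B_i\} \to v_{i+1}$ is exactly your three subdivided parallel branches per layer, with the same constraint graph forbidding monochromatic edges of the input graph, so both arguments establish $\COL \lep \DACC$ in the same way (and membership in $\NP$ identically). The only immaterial difference is that the paper additionally adds ``single assignment'' triangle constraints on $\{v_iR_i, v_iG_i, v_iB_i\}$, which your observation that an $s$--$t$ path can traverse only one branch per layer shows to be redundant.
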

\begin{proof}
    It is easy to see that $\DACC$ is in $\NP$ as one can take the certificate of a ``Yes'' answer to be the $st$-path itself, and one can easily check validity in polynomial time by looping through the constraints. 

    It remains to show that $\DACC$ is $\NP$-hard, which we do by giving a reduction $\COL\lep\DACC$.

    Let $G$ be an input to $\COL$, and let $n = |V(G)|$. We construct an input $(\Gamma, C, s, t)$ for $\DACC$ such that $G$ is 3-colorable if and only if $(\Gamma, C, s, t)$ yields the answer ``Yes'' in $\DACC$. 
    A key property will be that $st$-paths in $\Gamma$ correspond to 3-colorings (not necessarily proper) of the vertices of $G$. 

    First, let $V(G) = \{v_1,\dotsc,v_n\}$ under some arbitrary ordering. 
    To construct $\Gamma$ we include a copy of each $v_i$ in $V(G)$ and, for each $i$, three vertices labeled $R_i, G_i,$ and $B_i$. Then, we create (directed) edges from each $v_i$ to each of $\{R_i, G_i, B_i\}$, and for $i<n$ from each $\{R_i, G_i, B_i\}$ to $v_{i+1}$. Finally, we create new vertices $s$ and $t$, connect $s$ to $v_1$ and each of $\{R_n, G_n, B_n\}$ to $t$. This process takes time $\mathcal{O}(n)$.

    The DAG $\Gamma$ is built in such a way that an edge from $v_i$ to $R_i$, $G_i$ or $B_i$ can be used to represent an assignment of $v_i$ to the respective colors red, green, and blue. We add constraints to $C$ such that for each $i$
\begin{enumerate}
    \item\label{itm:singlassign} at most one of the edges $v_iR_i$, $v_iG_i$, and $v_iB_i$ can exist. 
    \item\label{itm:nonmono} For adjacent vertices $v_iv_j$, at most one of the edges $v_iR_i$ and $v_jR_j$ exist, and similar for the colors $G$ and $B$.
\end{enumerate}

The vertex set of $C$ is simply the edge set of $\Gamma$. 
In time $\mathcal{O}(n)$ we can add the ``single assignment constraints'' described in~\ref{itm:singlassign}. 
Note that this comprises the formation of triangles on the triples of vertices $\{v_iR_i,v_iG_i,v_iB_i\}$ for each $i$.
Similarly, in time $\mathcal{O}(n^2)$ we can add the ``coloring constraints'' described in~\ref{itm:nonmono}. 

Now that we have our input $(\Gamma, C, s, t)$ constructed in polynomial time, we must show that ``Yes'' for $G$ in $\COL$ corresponds to ``Yes'' for $(\Gamma, C, s, t)$ for $\DACC$.

Let $G$ be 3-colorable, and let $\varphi: V(G) \rightarrow \{R, G, B\}$ a proper coloring of $G$. We show that the path 
\[ P = (s, v_1, \varphi(v_1), v_2, \varphi(v_2),\dotsc, v_n, \varphi(v_n), t) \]
is valid in $\Gamma$. 
By construction $P$ cannot violate any of the single assignment constraints, and since $\varphi$ is proper it does not violate any of the coloring constraints.

Suppose that $G$ has no proper $3$-coloring. 
Then every path $P$ from $s$ to $t$ in $\Gamma$ is invalid, because each such path corresponds to a function $\varphi : V(G)\to\{R,G,B\}$ in the natural way. 
Since $G$ is not 3-colorable there must exist a pair $v_iv_j$ of adjacent vertices which get the same color under $\varphi$, but this means that the edges $v_i\varphi(v_i)$ and $v_j\varphi(v_j)$ are both in the path $P$. But this violates one of the coloring constraints. 
\end{proof}

Turning to the safety problem, we show $\coNP$-completeness in two steps. 

\begin{lemma}\label{lem:coSP-NP}
    $\coSP\in\NP$.
\end{lemma}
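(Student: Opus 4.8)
The plan is to produce a polynomial-size certificate of unsafety checkable in polynomial time. By Definition~\ref{def:safety}, if $M$ is unsafe then there exist a right $r\in R_\psi$, a pair $(u,rs)\in U\times R$, and a finite command sequence $S$ with $u\rightarrow^r_S rs$ but $u\not\rightarrow^r rs$; I would take the certificate to be a tuple $(r,u,rs,S')$ where $S'$ is a command sequence of length polynomial in $|M|$ that also produces this new access. A verifier simulates $S'$ on the initial state digraph $G_0$---each command is processed in polynomial time, since a condition is either a comparison $e_1=X$ or amounts to a single scan of the current digraph for $e_2\notin G_{t-1}$, and one primitive operation is then applied---and afterward checks in polynomial time that $u\rightarrow^r rs$ holds in the resulting digraph but not in $G_0$; this last test is polynomial because $\rightarrow^r$ is decidable by trying each association edge labelled $r$ and performing two reachability queries. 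The entire content of the lemma is thus that an unsafe model admits a \emph{short} witness, because the sequence $S$ handed to us may be arbitrarily long.

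To bound the length I would ``reset and rebuild''. Let $G^*$ be the state digraph after executing $S$ and take a walk witnessing $u\rightarrow^r rs$ in $G^*$. Splitting it at its $r$-labelled association edge $(ua,rsa,r)$ and using the layered structure of state digraphs---$A_U$ lies on $U\cup UA$, $A_R$ on $RA\cup R$, and association and prohibition edges run from $UA$ to $RA$---the part from $u$ to $ua$ stays inside $U\cup UA$ and the part from $rsa$ to $rs$ stays inside $RA\cup R$, so replacing each by a simple subpath and reattaching the association edge gives a \emph{simple} directed path $P$, with vertex set $V_P$ and edge set $E_P$ where $|V_P|,|E_P|\le|V|$. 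Writing $E_0=\{e\in E_P : e\in E(G_0)\text{ and }e\text{ is never deleted during }S\}$, I build $S'$ in three phases: \textbf{(i)} run the unconditional destroy command---guaranteed by Assumption~\ref{itm:alldestroy}---of every potential vertex outside $V_P$ and every potential edge outside $E_0$ (destroys of absent objects are harmless no-ops, and edges of $E_0$ survive because their endpoints lie in $V_P$), leaving the state with vertices $V_P\cap V(G_0)$ and edges $E_0$; \textbf{(ii)} for each vertex of $V_P$ not yet present, run the create command that $S$ used to create it (unconditional by Assumption~\ref{itm:onlyedgecreatehascond}); \textbf{(iii)} create the edges of $E_P\setminus E_0$ in increasing order of the step at which $S$ \emph{last} creates them, each via the very create command $S$ uses there, with that edge as parameter.

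The one delicate point is the executability of phase~\textbf{(iii)}: when $S'$ reaches the create command for an edge $e\in E_P\setminus E_0$, its conditions $e_2\notin G_{t-1}$ (and $e_1=X$) must still hold. Here I would use the key observation that, in the original run, just before $e$ is last created, every edge of $E_0$ and every edge of $E_P\setminus E_0$ whose last creation precedes $e$'s is present and stays present afterward; hence the edges forbidden by $e$'s command, being absent at that moment of $S$, lie outside this collection. But that collection is exactly the edge set present in $S'$ at the point where $e$ is to be created, and its vertices include the endpoints of $e$ with the correct types (by phase~\textbf{(ii)}), so the conditions that held in $S$ hold again and $e$ gets created. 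After phase~\textbf{(iii)} the whole path $P$ is present, so $u\rightarrow^r rs$, while $u\not\rightarrow^r rs$ in $G_0$ by hypothesis, and $S'$ is a valid witness. Counting commands, $|S'|=O(|V|^2|R_\psi|)$, which is polynomial in $|M|$. Establishing this length bound---together with the verification that the rebuilt sequence remains executable---is the main obstacle; the rest is routine.
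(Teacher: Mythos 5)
Your proposal is correct and uses the same certificate-and-verifier scheme as the paper (certificate a tuple $(u,rs,r)$ plus a command sequence; verification by simulating the sequence from $G_0$ and running reachability restricted to $r$-labelled association edges), but it supplies a substantial ingredient the paper's proof omits: a proof that an unsafe model admits a witness sequence of \emph{polynomial} length. The paper simply declares the sequence $S$ to be part of the certificate and asserts that applying it ``takes time polynomial in the size of $M$,'' which is only true if $|S|$ is already polynomially bounded --- a priori a shortest leaky sequence could be exponentially long in $|V|$, so this is exactly the point that needs an argument for $\NP$ membership. Your ``reset and rebuild'' compression (destroy everything off a simple witnessing path, then re-create its vertices and its edges in order of their last creation time in $S$, reusing the executability of each create command at that moment) is the NGAC analogue of the classical Harrison--Ruzzo--Ullman mono-operational compression, and the key monotonicity observation --- that every edge of the path whose last creation precedes $e$'s is present and persists, so the state your $S'$ has built is a subset of $G_{t_e-1}$ and hence satisfies all the negative conditions of $e$'s command --- is sound. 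Two caveats: your argument leans on the enumerated assumptions~\ref{itm:onlyedgecreatehascond}--\ref{itm:alldestroy} (unconditional vertex creation and destruction, conditions that only test edge absence), which the paper introduces for the algorithm of Section~\ref{sec:alg} and only ambiguously imposes on the inputs to $\SP$/$\coSP$; and there is a small unaddressed corner case where a vertex of the witnessing path lies in $G_0$ with a different type (e.g.\ user vs.\ user attribute) than it has in the final state, which your phase~(i) does not reset --- though the pairwise-disjointness requirement on $U, UA, R, RA$ arguably rules this out of the model entirely. Net: your write-up is more complete than the paper's own proof of this lemma.
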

\begin{proof}
    If an NGAC model $M$ (with the standard 11-tuple notation) is a ``Yes'' instance of $\coSP$, then there is a tuple $(u, rs, r) \in U \times R \times R_\psi$ and a sequence of commands $S$ such that $u \nrightarrow^r rs$ and $u \in\rightarrow_{S}^r rs$.
    We let the certificate be a combination of the tuple and the sequence $S$.
    To verify the certificate, we need to check both the conditions. 

    To check that $u \nrightarrow^r rs$ we start with the state digraph $G_0$ and remove all edges in $ASC$ that are not labeled $r$, then use breadth-first search to check that there is no path from $u$ to $rs$. This can be done in time polynomial in the size of the input $M$ to the $\coSP$ problem.

    To check that $u \in\rightarrow_{S}^r rs$, we start with $G_0$ and apply the sequence $S$ of commands. Since each command and possible input it can take are part of the input $M$, this takes time polynomial in the size of $M$. Suppose that this gives the state digraph $G_t$.
    Then, we can remove all edges in the the $ASC$ part of $G_t$ that are not labeled $r$, and use breadth-first search to check that there is a path from $u$ to $rs$ in this graph. Again, this takes polynomial time.
\end{proof}

We now give a reduction showing hardness.
The key idea is that we can build an NGAC model $M$ in which commands add or remove edges of the state digraph in such a way that the constraints in an instance $(\Gamma,C,s,t)$ of $\DACC$ are respected. 
Figure~\ref{fig:cc to safety reduction} shows the basic idea, where we embed valid subgraphs of $\Gamma$ in the subgraph of the state digraph induced by the user attributes $UA$. 
That is, we embed a hard instance $(\Gamma,C,s,t)$ of DACC in the subgraph of the state digraph of the NGAC model induced by the user attributes $UA$. 
It suffices construct an NGAC model with a single user $u$ connected by an edge to the user attribute represented by the source $s$ in $\Gamma$, and a single edge from the target $t$ of $\Gamma$ to a single resource attribute $rsa$ (labeled by a single right $r$). Finally, we have a single resource $rs$ to which $rsa$ is connected by an edge. 
The rest of the proof consists of showing that with a suitably-defined set of commands $COM$, the states of the model we can explore by running commands correspond to valid subgraphs of $\Gamma$. 

\begin{theorem}
    $\coSP$ is $\NP$-complete.
\end{theorem}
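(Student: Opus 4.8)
The plan is to combine Lemma~\ref{lem:coSP-NP}, which already gives $\coSP\in\NP$, with a polynomial-time reduction $\DACC\lep\coSP$; since $\DACC$ is $\NP$-hard by Theorem~\ref{thm:DACCNPcomp}, this establishes that $\coSP$ is $\NP$-complete. All of the work is therefore in the reduction, whose skeleton is the one sketched before the theorem statement and in Figure~\ref{fig:cc to safety reduction}: we embed a $\DACC$ instance $(\Gamma,C,s,t)$ into the user-attribute part of an NGAC model and arrange the commands so that the states reachable from the initial state correspond exactly to the valid subgraphs of $\Gamma$.

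Concretely, given $(\Gamma,C,s,t)$ with $\Gamma=(V,E)$ a DAG, I build $M$ as follows. The universe of $M$ consists of a single user $u$, one user attribute for each vertex of $\Gamma$ (in particular attributes named $s$ and $t$), one resource attribute $rsa$, and one resource $rs$; the only right is $R_\psi=\{r\}$. The initial state digraph $G_0$ contains the assignment $u\to s$ in $A_U$, the association edge $(t,rsa,r)$ in $ASC$, and the assignment $rsa\to rs$ in $A_R$, but \emph{none} of the edges of $\Gamma$, each of which we identify with the user-assignment edge on the corresponding pair of user attributes. The command set $COM$ contains, for each $e\in E(\Gamma)$, a command $\alpha_e(X)$ whose single operation is \textbf{create user assign.}\ $X$ and whose conditions are $X=e$ together with $e'\notin G_{t-1}$ for every neighbour $e'$ of $e$ in $C$; it also contains the (polynomially many) unconditional \textbf{destroy} commands required by assumption~\ref{itm:alldestroy}. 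This is constructible in time polynomial in $|\Gamma|+|C|$, and one checks that $M$ meets all of our standing assumptions: the user DAG stays acyclic because $\Gamma$ is acyclic and $u$ is always a source, the resource DAG is a single edge, every condition has the form allowed by assumptions~\ref{itm:onlyedgecreatehascond}--\ref{itm:onlyedgechecksincond}, and $V$ and $R_\psi$ are finite.

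The core of the proof is the claim that the reachable states of $M$ are precisely those obtained from $G_0$ by choosing a valid subgraph $\Gamma'\subseteq\Gamma$ to be present among the user assignments (together with possibly deleting some of the fixed vertices and edges, which can only remove access). In one direction, I induct on the length of a command sequence to show that the set of $\Gamma$-edges present in any reachable state is an independent set of $C$: deletions preserve this trivially, and $\alpha_e$ can fire only when no $C$-neighbour of $e$ is currently present, so adding $e$ keeps the set independent. In the other direction, given a valid subgraph, i.e.\ an independent set $I$ in $C$, I install it by applying the commands $\alpha_e$ for $e\in I$ in any order; when $\alpha_e$ is to be applied, the edges present so far form a subset of $I$, which by independence contains no $C$-neighbour of $e$, so the guard is satisfied.

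It then remains to translate access into connectivity. In $G_0$ there is no path from $u$ to $rs$, so $u\nrightarrow^r rs$; and because $u$ is the only user, $r$ the only right and $rs$ the only resource, $M$ is unsafe precisely when some command sequence $S$ gives $u\rightarrow_S^r rs$. Any walk from $u$ to $rs$ must go $u\to s$, then along a directed path inside the currently-present $\Gamma'$, then $t\to rsa\to rs$, necessarily using the association edge $(t,rsa,r)$; so by the reachability claim such an $S$ exists if and only if some valid subgraph of $\Gamma$ has a directed $s$--$t$ path, i.e.\ iff $(\Gamma,C,s,t)$ is a ``Yes'' instance of $\DACC$. Hence $\coSP(M)$ and $\DACC(\Gamma,C,s,t)$ have the same answer, which completes the reduction. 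I expect the reachability claim to be the main obstacle, or at least the point needing the most care: the guard conditions must be simultaneously strong enough that no non-valid configuration is ever reached and weak enough that every valid subgraph is reachable, and one must confirm that no side effect of the available create/destroy commands (such as deleting and being unable to recreate $u\to s$, or destroying attribute vertices) can ever manufacture a new access path. The extreme minimality of the gadget---one user, one right, one resource, and create commands only for edges of $\Gamma$---is exactly what keeps this side-effect analysis routine.
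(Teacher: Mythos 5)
Your proposal is correct and takes essentially the same approach as the paper: the same gadget (a single user $u$ assigned to $s$, user attributes for $V(\Gamma)$, a single right, resource attribute and resource hanging off $t$), guarded \textbf{create user assign.} commands encoding the constraint graph, unconditional destroys, and the same key claim that reachable states correspond to valid subgraphs of $\Gamma$. If anything, your version is slightly more careful than the paper's: you give one create command per edge $e\in E(\Gamma)$ whose guard checks \emph{all} $C$-neighbours of $e$ (and you include create commands for unconstrained edges), whereas the paper issues one command per constraint, which as literally written would let an edge be created while a different constrained partner is present and omits create commands for constraint-free edges.
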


\begin{proof}
    Sine we have Lemma~\ref{lem:coSP-NP} and Theorem~\ref{thm:DACCNPcomp}, it suffices to show that $\DACC\lep\coSP$. 
    
    Let $(\Gamma, C, s, t)$ be the input of $\DACC$. 
    We construct an NGAC model $M$ with the standard 11-tuple notation as follows. 
    Let $U = \{u\}$, $UA = V(\Gamma)$, $R = \{rs\}$, $RA = \{rsa\}$, $A_U = \{(u, s)\}$, $A_R = \{(rsa, rs)\}$, $ASC = \{(t, rsa, r)\}$, $P = \emptyset$, $R_\psi = \{r\}$, $V=\{u,rs,rsa\}\cup V(\Gamma)$, and let $COM$ be the set of commands defined as follows. 
    
    First, in $COM$ we include an unconditional destroy command for every vertex and and edge of $\Gamma$. Then, for every constraint $e_1e_2 \in E(C)$, we add to $C$ the two commands

    \begin{algorithmic}
    \Function{$\alpha_{1,2}$}{$X$}
        \If{$e_1 == X$ \textbf{and} $e_2\notin G_{t-1}$}
        \State \textbf{create user assign.} $X$
        \EndIf
    \EndFunction

    \Function{$\alpha_{2,1}$}{$X$}
        \If{$e_2 == X$ \textbf{and} $e_1\notin G_{t-1}$}
        \State \textbf{create user assign.} $X$
        \EndIf
    \EndFunction
    \end{algorithmic}

    The conditions in the command ensure that when starting from a state in which at most one of $e_1$ and $e_2$ exist at the same time, then this remains true. 

    This gives an input $M$ to $\coSP$, one that can clearly be constructed in polynomial time in the size of $\Gamma$.

    Given the command set $COM$, it is straightforward to see that the valid subgraphs of $\Gamma$ are in bijection with possible states of the user DAG, the subgraph of the state digraph induced by the user attributes $UA$.
    Moreover, we have $u\nrightarrow^r rs$ in the initial state graph $G_0$ by construction.
    It remains to show that the answer for $(\Gamma,C,s,t)$ in $\DACC$ is ``Yes'' if and only if the answer for $M$ in $\coSP$ is ``Yes''. 

    Let $(\Gamma, C, s, t)$ be a ``Yes'' instance of  $\DACC$. Then there is a valid subgraph $\Gamma'$ of $\Gamma$ in which there is a path from $s$ to $t$. 
    But this means that there is a sequence of commands $S$ that results in the state digraph $G_t$ having a path from $u$ to $rs$ through an edge in $ASC$ labeled $R$. 
    That is, we have $u \rightarrow_S^r rs$ and hence $M$ gives a ``Yes'' in $\coSP$.
    
    Let $(\Gamma, C, s, t)$ be a ``No'' instance of $\DACC$.
    Then, there is no valid subgraph of $\Gamma$ that connects $s$ and $t$. But the conditions on edge creation commands in $COM$ mean that the only valid states of $M$ are digraphs in which the subgraph induced by $UA$ is a valid subgraph of $\Gamma$. 
    Thus, there is never an $st$-path in a state digraph of $M$ and hence there is never an access relation of the form $u\rightarrow_S^r rs$ for any sequence $S$ of commands. Then $M$ is a ``No'' instance of $\coSP$ as required. 
\end{proof}

\begin{figure}[!htb]
    \centering
    \begin{tikzpicture}[scale=0.75]

    \draw[thick] (0,0) circle (2cm);

    \node at (0, 0) {$\Gamma$};

    \filldraw[black] (-1,1) circle (2pt);
    
    \node[anchor=south] at (-1,1.1) {$s$};

    \filldraw[black] (-3,3) circle (2pt);
    
    \node[anchor=south] at (-3,3.1) {$u$};

    \filldraw[black] (1,-1) circle (2pt);
    
    \node[anchor=south] at (1,-0.9) {$t$};

    \filldraw[black] (5,-1) circle (2pt);
    
    \node[anchor=west] at (5,-0.9) {$rsa$};

    \filldraw[black] (5,3) circle (2pt);
    
    \node[anchor=south] at (5,3) {$rs$};

    \draw[->] (-3,3) -- (-1.1,1.1);
    \draw[dashed, ->] (1,-1) -- (4.9,-1) node[midway, above] {$r$};
    \draw[->] (5,-1) -- (5,2.9);

\end{tikzpicture}
    \caption{Reduction from $\DACC$ to $\coSP$.}
    \Description[A picture of the reduction from $\DACC$ to $\coSP$]{A picture of the reduction from $\DACC$ to $\coSP$ in which valid subgraphs of $G$ are embedded in the subgraph of the state digraph of the NGAC model induced by the user attributes.}
    \label{fig:cc to safety reduction}
\end{figure}
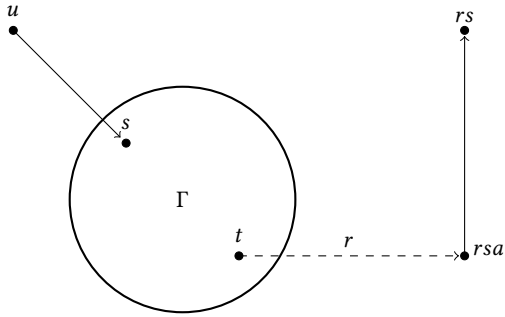

\section{Algorithm}\label{sec:alg}

In this section, we will introduce an algorithm that solves the safety problem for the NGAC model significantly faster than naive brute-force search. 
We will first give an algorithm to solve $\DACC$, and then explain how we can transform a safety problem input to an input for $\DACC$ to use the algorithm.

\subsection{Solving \texorpdfstring{$\DACC$}{DACC}}

Let $(G, C, s, t)$ be an input to $\DACC$. Since we are trying to find a path in a valid subgraph of $G$, it suffices to enumerate \emph{maximal} valid subgraphs of $G$. 
This is because adding edges to a valid subgraph cannot destroy the path we are looking for. 
In terms of the constraint graph $C$, this means it suffices to enumerate the maximal independent sets $\mu(C)$ of $C$. 
An \emph{output sensitive} enumeration algorithm for a set $S$ with delay $d$ takes time at most $d|S|$ to enumerate $S$. We think of the delay as an upper bound on the computation time needed per element for the numeration. 
Our algorithm simply uses a state-of-the-art output-sensitive enumeration algorithm for MIS~\cite{MU04}\footnote{The algorithm \textsc{AllMaxCliques} in particular, though MIS in $G$ are maximal cliques in the complement of $G$ so we have to take the complement first.} and checks for the existence of an $st$-path in each of the returned MIS of $C$ (which corresponds to a valid subgraph of $G$) with breadth-first search. 
See Algorithm~\ref{alg:DACC}.
The algorithm we use~\cite{MU04} for enumerating MIS has a time delay of $M(n)$ when the input is an $n$-vertex graph, where $M(n)$ is the time needed to multiply two $n\times n$ matrices. The best available result~\cite{ADW+25} shows that $M(n) = \mathcal{O}(n^{2.372})$ (though exponents approximately $2.37$ have been known for some time).

\begin{algorithm}[H]
    \caption{An algorithm for $\DACC$ based on enumeration of MIS.}
    \label{alg:DACC}
    \begin{algorithmic}[1]
        \Procedure{DACC-via-MIS}{$G = (V, E), C, s, t$}
        \For{$I \in \mu(C)$}\label{line:mu} \Comment{Use the algorithm of~\cite{MU04}}
            \State $G' \gets (V, I)$
            \If{$\exists$ $st$-path in $G'$}\label{line:BFS} \Comment{Use breadth-first search}
                \State \Return True
            \EndIf
        \EndFor
        \State \Return False
        \EndProcedure
    \end{algorithmic}
\end{algorithm}

A well-known result in extremal combinatorics~\cite{MM60,MM65} shows that the number of MIS in an $n$-vertex graph is at most $3^{n/3} < 1.45^n$. The result is often attributed to Moon and Moser~\cite{MM65}, but was proved independently by Miller and Muller~\cite{MM60} several years before. 
Simpler proofs were given more recently by Vatter~\cite{Vat11} and Wood~\cite{Woo11}.
This result forms the basis of our analysis of the worst-case time complexity of our algorithm for $\DACC$. 
The upper bound is achieved by a disjoint union of triangles, which we refer to later in our consideration of whether real-world examples are close to the worst case of this algorithm.
Overall, in the worst case this enumeration of MIS significantly outperforms a naive search over all sets (filtering out the non-independent ones), and more importantly there are classes of graphs with a polynomially-bounded number of maximal independent sets and thus the algorithm takes polynomial time on such graphs. 
We bound the running time of Algorithm~\ref{alg:DACC} below.

\begin{lemma}\label{lem:DACCruntime}
    Let $(G,C,s,t)$ be an input to Algorithm~\ref{alg:DACC} such that $n=|V(G)|$, $m=|E(G)|$ and $\mu(C)$ is the set of MIS in $C$. 
    Then the running time is 
    \[ \mathcal{O}(n+m)\cdot M(m) \cdot \mu(C) \le \mathcal{O}(n^{6.75})\cdot \mu(C)\le \mathcal{O}(1.45^{n^2}).\]
\end{lemma}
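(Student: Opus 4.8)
The plan is to bound the running time of Algorithm~\ref{alg:DACC} line-by-line and then coarsen the resulting bound into the two cleaner forms stated in the lemma using elementary estimates on $m$ and $|\mu(C)|$ in terms of $n$. First I would note that the \textbf{for} loop on Line~\ref{line:mu} executes exactly $|\mu(C)|$ times. Within one iteration, forming $G'=(V,I)$ and running breadth-first search on Line~\ref{line:BFS} costs $\mathcal{O}(|V|+|I|)=\mathcal{O}(n+m)$, since $|I|\le|E(\Gamma)|=m$. The cost of producing the MIS themselves is governed by the output-sensitive algorithm of Makino and Uno~\cite{MU04}: applied to the graph $C$, which has $|V(C)|=|E(\Gamma)|=m$ vertices (a count unaffected by the single complementation mentioned in the footnote), it enumerates with delay $M(m)$, so it spends at most $M(m)\cdot|\mu(C)|$ time in total. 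Adding the per-iteration BFS work gives a total of at most $|\mu(C)|\cdot\bigl(M(m)+\mathcal{O}(n+m)\bigr)$, which I bound above crudely by $\mathcal{O}(n+m)\cdot M(m)\cdot|\mu(C)|$; any one-time preprocessing such as building the complement of $C$ costs $\mathcal{O}(m^2)$ and is absorbed.

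Next I would pass to bounds purely in $n$. Because $\Gamma$ is a directed acyclic — in particular simple — graph, it has at most $n(n-1)$ edges, so $m=\mathcal{O}(n^2)$ and hence $n+m=\mathcal{O}(n^2)$. Using the current best matrix-multiplication bound $M(k)=\mathcal{O}(k^{2.372})$~\cite{ADW+25} we get $M(m)=\mathcal{O}(n^{4.744})$, whence $\mathcal{O}(n+m)\cdot M(m)=\mathcal{O}(n^{6.744})\subseteq\mathcal{O}(n^{6.75})$, giving the second expression. For the third, I would invoke the Moon--Moser bound~\cite{MM60,MM65}: $|\mu(C)|\le 1.45^{|V(C)|}=1.45^{m}\le 1.45^{n(n-1)}\le 1.45^{n^2}$, and then observe that multiplying by the polynomial factor $\mathcal{O}(n^{6.75})$ is absorbed into the exponential term, yielding $\mathcal{O}(1.45^{n^2})$.

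The routine parts here are the per-iteration accounting and the arithmetic of exponents. The one point that genuinely requires care is identifying which graph the Makino--Uno algorithm is run on and recognizing that its delay is measured in the vertex count of \emph{that} graph, i.e.\ in $m=|E(\Gamma)|$ rather than in $n=|V(\Gamma)|$ — and that the complementation step does not change this count. A second, minor, subtlety worth flagging is that the final $\mathcal{O}(1.45^{n^2})$ is mildly informal: the precise statement is that the running time is at most $\mathrm{poly}(n)\cdot 1.45^{n^2}=2^{\mathcal{O}(n^2)}$ (one can, if desired, replace $1.45$ by any slightly larger constant to get a clean $\mathcal{O}(\cdot)$). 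This does not affect the takeaway that the algorithm runs in polynomial time on every instance for which $|\mu(C)|$ is polynomially bounded, which is the real content of the lemma.
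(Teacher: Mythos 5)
Your proof is correct and follows essentially the same route as the paper's: per-iteration BFS cost $\mathcal{O}(n+m)$, enumeration delay $M(m)$ measured in $|V(C)|=m$, the bound $m\le n^2$, and the Moon--Moser bound for the final inequality. The only cosmetic difference is that the paper bounds $\mu(C)\le 3^{m/3}<1.45^{n^2}$, and the slack between $3^{1/3}\approx 1.442$ and $1.45$ absorbs the polynomial factor, so the informality you flag in the final $\mathcal{O}(1.45^{n^2})$ is already taken care of.
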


\begin{proof}
We have an input $(G,C,s,t)$ for the $\DACC$ problem with $n=|V(G)|$ and $m=|E(G)|$. 
Algorithm~\ref{alg:DACC} is simple and loops over the MIS of $C$, executing for each MIS a breadth-first search in a sungraph of $G$. 
This search takes time $\mathcal{O}(n+m)$ (the worst-case time complexity of standard breadth-first search).
Since $|V(C)|=|E(G)|=m$, the algorithm~\cite{MU04} takes time $M(m)\cdot \mu(C)$ to do the required enumeration. 
This is because $M(m)$ is the time delay and $\mu(C)$ is the size of the set being enumerated.
Note that $m\le n^2$ and $M(m)\le \mathcal{O}(m^{2.372})$ to see the first inequality in the running time bound. For the final inequality note that $\mu(C)\le 3^{m/3} < 3^{n^2/3} < 1.45^{n^2}$.
\end{proof}

\subsection{Solving the safety problem}\label{section: solving sp}\label{sec:solvesaftey}

Let $M = (U, UA, R, RA, R_{\psi}, A_U, A_R, ASC, P, V, COM)$ be an input to the safety problem $\SP$ and that $M$ satisfies our assumptions. 
Recall that in addition to having mono-operational commands, we require the enumerated assumptions~\ref{itm:onlyedgecreatehascond}--\ref{itm:alldestroy} stated in Section~\ref{sec:prelim}.

Let $N$ be the length of the input $M$ in a natural binary encoding.
We will construct an input $(\Gamma, C, s, t)$ for $\DACC$ in time polynomial in $N$, and use algorithm \ref{alg:DACC} to solve $\DACC$ for $(\Gamma, C, s, t)$ in order to solve the safety problem for $M$. 

\subsubsection{Constructing \texorpdfstring{$\Gamma$}{Gamma}}

We let $\Gamma$ be the \emph{supergraph} of $M$, meaning that $\Gamma$ is the result of starting with the initial state digraph of $M$ and executing, for all valid inputs, all possible commands in $COM$ whose primitive operation is one of the form ``\textbf{create} \ldots'', \emph{without} checking the condition(s) in the command.
Since we assume that $COM$ is finite, and $V$ and hence the vertex set of $\Gamma$ is also finite, and we know that the number of edges of $\Gamma$ is bounded by polynomial in $|V|$ and $|R_{\psi}|$, it takes time polynomial in the size of the input $M$ (which includes $COM$, $R_{\psi}$ and $V$) to construct $\Gamma$. 
We also have the property that $G_t \subseteq \Gamma$ where $G_t$ is the state digraph at any given time $t$. 

\subsubsection{Constructing \texorpdfstring{$C$}{C}}\label{sec: construct C}

We construct a constraint graph $C$ for the supergraph $\Gamma$ which enforces the conditions of the create commands in the model $M$. 
The goal is to do this in such a way that valid subgraphs of $\Gamma$ correspond to possible state digraphs $G_t$ of the model $M$.

We initialize an empty constraint graph $C = (E(\Gamma), \emptyset)$. 
By assumptions~\ref{itm:onlyedgecreatehascond} and~\ref{itm:onlyedgechecksincond}, each edge creation command has one parameter $X$ corresponding to an edge of $\Gamma$ and must have a conditional of the form 
\[ e == X \mathop{\textbf{and}} cond_1 \mathop{\textbf{and}} cond_2 \mathop{\textbf{and}} \dotsb \mathop{\textbf{and}} cond_m, \]
where each $cond_i$ is of the form $e_i\notin G_{t-1}$ for some potential edge (element of $V\times V$ or $V\times V \times R_\psi$).
We may assume that $e_i\in\Gamma$, else the command can never execute and we may remove it.
Then for each such command $c$ with this structure we add the edges 
\[ \{ ee_i : i\in M\} \]
to $C$. 
This will make sure that in a valid subgraph $\Gamma'\subseteq \Gamma$, if $e$ exists in $\Gamma'$ then the edges in the conditions of $c$ must not exist. 

\begin{lemma}\label{lem:validsubgraphs}
    The valid subgraphs $\Gamma'$ of $\Gamma$ correspond to possible states of the model $M$.
\end{lemma}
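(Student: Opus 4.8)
The plan is to prove that the stated correspondence is a bijection by establishing two containments: (a) every state digraph $G_t$ reachable from the initial state $G_0$ by a finite sequence of commands is a valid subgraph of $\Gamma$, and (b) every valid subgraph $\Gamma'$ of $\Gamma$ equals $G_t$ for some reachable $t$. The three structural assumptions do the work here: assumptions~\ref{itm:onlyedgecreatehascond} and~\ref{itm:onlyedgechecksincond} pin down exactly which commands carry conditions and the shape of those conditions, and assumption~\ref{itm:alldestroy} lets us freely tear down a state and rebuild it.

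For the forward containment (a) I would induct on the length of the command sequence producing $G_t$. The base case records that $G_0\subseteq\Gamma$ (immediate, since $\Gamma$ is obtained from $G_0$ only by adding vertices and edges) and that $E(G_0)$ is independent in $C$ (which we may assume as part of well-formedness of the input, i.e.\ the initial state is itself consistent with the edge-creation conditions). For the inductive step, a vertex-creation command keeps us inside $\Gamma$ (its output vertex, being creatable, is a vertex of $\Gamma$) and does not touch edges, while every destroy command --- unconditional by assumption~\ref{itm:alldestroy} --- only removes vertices and edges; all of these preserve both ``$\subseteq\Gamma$'' and ``edge set independent in $C$''. The remaining case is an edge-creation command for an edge $e$: by assumptions~\ref{itm:onlyedgecreatehascond} and~\ref{itm:onlyedgechecksincond} it fired only because $e$ matched the parameter and every edge $e_i$ appearing in its conditions satisfied $e_i\notin G_{t-1}$. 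Since $C$ joins $e$ to exactly those $e_i$, adding $e$ to the (inductively independent) set $E(G_{t-1})$ leaves it independent in $C$, and $e\in E(\Gamma)$ because $\Gamma$ is formed by running all create commands while ignoring conditions. Hence $G_t$ is a valid subgraph of $\Gamma$.

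For the backward containment (b), given a valid subgraph $\Gamma'$ I would write down a command sequence realizing it: first use the unconditional destroy commands to delete every edge of the current state, and delete or create vertices until the vertex set is exactly $V(\Gamma')$; then create the edges of $\Gamma'$ one at a time, in any order. When we come to create $e\in E(\Gamma')$, the conditions of its creation command only demand that certain edges $e_i$ be absent, and each such $e_i$ is a neighbour of $e$ in $C$, so $e_i\notin E(\Gamma')$ because $E(\Gamma')$ is independent in $C$; since we only ever add edges of $\Gamma'$, no such $e_i$ is ever present, so every creation succeeds and the final state is exactly $\Gamma'$.

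The step I expect to be the main obstacle is making the forward induction airtight when an edge $e$ of $\Gamma$ can be produced by more than one edge-creation command, since then the $C$-neighbourhood of $e$ collects the forbidden edges of all those commands at once, whereas a single run only checks the conditions of the command it actually uses. The clean way around this is to normalize $M$ first --- observing that we may assume each edge of $\Gamma$ has at most one creation command --- after which ``the $C$-neighbours of $e$ are absent'' is exactly the condition that was checked and the induction goes through; the backward direction, by contrast, only ever relies on the facts that conditions test absence of edges and that every destroy is unconditional, so it is essentially routine once the vertices are in place.
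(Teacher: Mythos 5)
Your proof follows the same route as the paper's: the backward containment by explicitly exhibiting a destroy-then-create command sequence, and the forward containment by induction on the length of the command sequence (which the paper only asserts is ``easy to prove by induction,'' so your case analysis is an expansion rather than a deviation). One step in your backward direction would fail as literally written: you delete \emph{every} edge of the current state and then recreate the edges of $\Gamma'$, but an edge in $E(\Gamma')\cap E(G_0)$ belongs to $\Gamma$ merely because it was present initially, and nothing in assumptions~\ref{itm:onlyedgecreatehascond}--\ref{itm:alldestroy} guarantees such an edge has a creation command; the paper sidesteps this by destroying only the elements of $G_0$ absent from $\Gamma'$ and creating only the elements of $\Gamma'$ absent from $G_0$, and you should do likewise. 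On the other hand, the two caveats you raise are real and go unmentioned in the paper: the base case does need $E(G_0)$ to be independent in $C$ (otherwise $G_0$ itself is not a valid subgraph and the claimed correspondence already fails at time $0$), and an edge of $\Gamma$ with two creation commands carrying different conditions breaks the forward direction, since $C$ as constructed joins that edge to the union of both forbidden sets while an actual run need only satisfy the conditions of the one command it uses. Your fix --- requiring each edge to have at most one creation command --- does make the lemma true, but note that it is an additional hypothesis rather than a cost-free normalization: two commands with condition sets $A$ and $B$ permit creation when $A$ \textbf{or} $B$ holds, and such a disjunction of absence checks is not in general expressible as a single conjunction, so models of that form genuinely fall outside the constraint-graph construction.
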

\begin{proof}    
    Let $G_0$ be the initial state digraph of $M$. Writing 
    \[ M= (U, UA, R, RA, R_{\psi}, A_U, A_R, ASC, P, V, COM), \] we have
    \[ G_0 = (U \cup UA \cup R \cup RA, A_U \cup A_R \cup ASC \cup P).\] 

    We first prove that any valid subgraph $\Gamma'$ of $\Gamma$ is a valid state of the model. Given $\Gamma'$, we want to sequence of commands $S$ which, starting from $G_0$, puts the model in state $\Gamma'$.

    First, for all vertices and edges that are in $G_0$ but not in $\Gamma'$, we have to add a command to $S$ to destroy these elements of the state digraph. Note that by assumption~\ref{itm:alldestroy}, all vertices and edges must have an unconditional destroy command, thus we are able to destroy the necessary vertices and edges.
    
    Next, we have to create all the vertices and edges that are in $\Gamma'$ but not in $G_0$. Since $\Gamma$ was constructed to be the supergraph of $M$, for every necessary vertex or edge creation there is a command to create the required structure in the model. 
    If this command is unconditional then we simply add it to the sequence $S$. 
    By assumption~\ref{itm:onlyedgecreatehascond}, only edge creation commands have conditions, and by assumption~\ref{itm:onlyedgechecksincond} these conditions merely check for the absence of some edges. But by the construction of $C$ and the fact that $\Gamma'$ is valid, the conditions of any create command that we need to add to $S$ will be true at the time of execution. 
    
    Now we prove that any state of the model is a valid subgraph of $\Gamma$. Let $S$ be a sequence of commands and suppose that running $S$ from $G_0$ yields the state digraph $G_t$.
    
    By the construction of the command set $COM$, it is easy to prove (formally, by induction on the length of the sequence of commands $S$) that $G_t$ satisfies the constraints encoded by $C$. 
    The key to the proof is the construction of the constraints in $C$ from the commands themselves.
\end{proof}

\subsubsection{Testing for access paths}

Now that we have constructed $\Gamma$ and $C$, we want to loop over all possible access paths and check that, for each path that does not exist in the $G_0$, there is no sequence of commands $S$ that results in the access path existing. 

First, we fix a tuple $(u, rs, r) \in U \times R \times R_\psi$.
Then, if $u \not\rightarrow^r rs$ we run let $\Gamma^r$ be the subgraph of $\Gamma$ obtained from $\Gamma$ by deleting all association edges not labeled with $r$ and run Algorithm~\ref{alg:DACC} with input $(\Gamma^r, C, u, rs)$. 
If the answer is ``Yes'', then there must exist a sequence of commands $S$ such that $u \rightarrow_{S}^r rs$, which means we can return ``No'' for the safety problem with input $M$.

If all queries $\DACC(\Gamma^r, C, u, rs)$ in the loop above tuple return ``No'', then $M$ is safe and we return ``Yes'' in the safety problem.

\subsection{Analysis of running time}

Constructing $\Gamma$ takes time at most 
\[ \mathcal{O}(|COM|\cdot |V|^2\cdot |R_\psi|) \]
as the bulk of the construction occurs in a loop over each possible labeled edge of state the digraph in the model, where we check for a create command for that edge and add it to $\Gamma$.

Constructing $C$ takes time at most 
\[ \mathcal{O}(|COM|\cdot m_{\max} \cdot |O|^2\cdot |R_\psi|), \]
where $m_{\max}\le \mathcal{O}(|V|^2|R_\psi|)$ is the maximum number of conditions in any command in $COM$

Testing for access paths takes time at most 
\[ |O|^2|R_\psi| \cdot \mathcal{O}(|V|^2)\cdot M(|V|^2) \cdot \mu(C)  \]
because we execute for each tuple $(u, rs, r) \in V\times V\times R_\psi$ Algorithm~\ref{alg:DACC} with the input $(\Gamma^r, C, u, rs)$.
This $\Gamma^r$ has at most $|V|$ vertices and $|V|^2$ edges, and hence $\mu(C)\le 1.45^{|V|^2}$

Therefore the total runtime is at most 
\[ \mathrm{poly}(|M|)\cdot1.45^{|V|^2},\]
where $|M|$ is the size of the input $M$ and $|V|$ is the size of the set $V$ of possible objects.

\section{Real-world examples and MIS}\label{sec:real}

In this section we observe that real-world NGAC models can elicit nearly worst-case running time of the algorithm to solve the safety problem described in Section~\ref{sec:alg}. 
The main contributor to the running time is the number $\mu(C)$ of maximal independent sets in the constraint graph $C$ that is built from the commands of the model which enforce separation of duty. 

In practice, we often have small groups of mutually exclusive attributes. 
For example, the attributes \emph{teacher}, \emph{student}, and \emph{staff} might be (pairwise) mutually exclusive in an NGAC model designed for access control in an educational setting. 
To enforce this fact in the model we would have edge creation commands that will only create e.g.\ the edge from a user $u$ to the user attribute \emph{teacher} if neither the edges from $u$ to \emph{student} nor from $u$ to \emph{staff} are present. 
This leads to a triangle in $C$ on the vertices corresponding to the three mutually exclusive attributes. 
More generally, a set of $k$ mutually exclusive attributes leads to a clique of size $k$ in $C$.
If there are no other attributes that are incompatible with the set of mutually exclusive ones then this clique will be separated from the rest of the graph. 
So, for each set $K$ of mutually exclusive and otherwise non-interacting attributes, we see a clique in $C$ disjoint and unconnected to the rest of the graph.

But a disjoint union of small cliques leads to nearly the worst case for the number of MIS in a graph. Indeed, the results discussed in the introduction~\cite{MM60,MM65} show that a disjoint union of \emph{triangles} is actually the worst case. As seen from the above example, a common organizational structure leads to the creation of a small clique in $C$, so we can expect that typical NGAC models present somewhat challenging inputs for our algorithm to handle. 

Graphs of very high minimum degree have a polynomial number of MIS, but we do not expect typical organizational structures to yield this kind of constraint graph. This would require that all attributes are only compatible with a constant-sized set of other attributes. In the presence of user and resource attributes this seems very unlikely.

\section{Conclusions and future work}
In this paper, we analyzed the computational complexity of the Safety Problem in NGAC models. We show that under mild assumptions, the Safety Problem in NGAC models is $\coNP$-Complete. We also propose an algorithm to solve the Safety Problem that exploits the combinatorial nature of this problem which outperforms naive brute-force algorithms. We show that mutually exclusive attributes drive the computational complexity of the problem and the running time of our algorithm.

In future work, it would be interesting to test our algorithm with real-world or synthetic data to see if the safety problem tends to be intractable on realistic instances.
It would also be interesting to study the fine-grained aspects of complexity subject to parameters that control the number of groups of mutually exclusive attributes and their size.

\begin{acks}
The authors would like to thank the anonymous referees for
their comments and suggestions. 

This work is supported by the \grantsponsor{US-NSF}{U.S.\ National Science Foundation}{https://www.nsf.gov} under Grant
No.~\grantnum{US-NSF}{CCF-2309707}
and the \grantsponsor{US-NIST}{U.S.\ National Institute of Standards and Technology}{https://www.nist.gov} under Grant No.~\grantnum{US-NIST}{60NANB23D152}.
\end{acks}

\printbibliography

\end{document}